\newtheorem{thm}{Theorem}
\begin{document}

\title[Part I: i.i.d. sources]{Layered black-box, behavioral interconnection perspective and applications to problems in communications, Part I: i.i.d. sources}

\author[M. Agarwal, S. Mitter, and A. Sahai]{Mukul Agarwal$^\dag$, Sanjoy Mitter$^\ddag$, and Anant Sahai\blfootnote{$^\dag$This work is a part of the authors Doctoral research, carried out in the Laboratory for Information and Decision Systems, Department of Electrical Engineering and Computer Science, Massachusetts Institute of Technology, Cambridge MA 02139.}
\blfootnote{$^\ddag$This research has been supported by NSF Grant CCF-0325774 ITR :: Collaborative Research : New Approaches to Experimental Design and Statistical Analysis and Structural Biologic Data from Multiple Sources (2003-2009 )   and NSF grant EECS-1135843  CPS : Medium : Collaborative Research : Smart Power Systems of the Future : Foundations for Understanding Volatility and Improving Operational Reliability ( 2011--2017 ).}}

\begin{abstract}
In this paper, the problem of communication over an essentially unknown channel, which is known to be able to communicate a source to a destination to within a certain distortion level, is considered from a behavioral, interconnection view-point. Rates of reliable communication are derived and source-channel separation for communication with fidelity criteria is proved. The results are then generalized to the unicast multi-user setting under certain assumptions. Other applications of this problem  problem which follow from this perspective are discussed.
\end{abstract}

\maketitle

\section{Introduction}

In this paper, the problem of communication over an essentially unknown channel which is known to be able to communicate a source to a destination to within a certain distortion level, is considered from a behavioral, interconnection view-point.

Let $\mathbb I$ and $\mathbb O$ be two finite sets. Consider a channel $k$ with input space $\mathbb I$ and output space $\mathbb O$. Let $\mathbb X$ and $\mathbb Y$ be finite sets. Let $e$ be an encoder with input space $\mathbb X$ and output space $\mathbb I$ and $f$ be a decoder with input space $\mathbb O$ and output space $\mathbb Y$. In the information theoretic setting, the encoder, channel and decoder are all sequences $e = <e^n>_1^\infty$, $k = <k^n>_1^\infty$ and $f = <f^n>_1^\infty$ where superscript $n$ denotes block-length. $k$ may be a memoryless or a Markoff channel, or more generally, a general channel in the sense of Verdu-Han \cite{VerduHan}, that is, $k^n: \mathbb I^n \rightarrow \mathbb O^n$ and $k^n(o^n|\iota^n)$ is the probability that the channel output is $o^n$ given that the input is $\iota^n$. An important point is that $c = <c^n>_1^\infty = <e^n \circ k^n \circ f^n>_1^\infty$ is a general channel in the sense of Verdu-Han with input space $\mathbb X$ and output space $\mathbb Y$. For this reason, if statements can be made about a general channel $c$, and if $c$ is or can be thought of as the composition of an encoder $e$, channel $k$ and decoder $f$, a similar statement can made about the channel $k$. Note that even if $k$ has special structure (memoryless/ Markoff/ indecomposable), $c$ is a general channel. This paragraph is elaborated in greater detail below.

Let $c$ be a general channel with input space $\mathbb X$ and output space $\mathbb Y$. Let a guarantee $G$ (for example, reliable communication or communication within a certain distortion or communication within a certain error probability or something else) be met over $c$ for the \emph{direct} transmission or a random source $X = <X^n>_1^\infty$ on $\mathbb X$. That is, with input $X$ to $c$, the output is $Y = <Y^n>_1^\infty$ such that guarantee $G$ is met between $X$ and $Y$.  Let it be possible to communicate a random source $X' = <{X'}^n>_1^\infty$ on $\mathbb X'$ over $c$ with some guarantee $G'$. This means the following. There exists an encoder $e'= <{e'}^n>_1^\infty$ with input space $\mathbb X'$ and output space $\mathbb X$ and a decoder $f' = <{f'}^n>_1^\infty$ with input space $\mathbb Y$ and output space $\mathbb Y'$ such that with input $X'$ to $e'$, the output of $f'$ is $Y' = <{Y'}^n>_1^\infty$ such that guarantee $G'$ is satisfied between $X'$ and $Y'$. Then, if c were the composition of encoder, channel, decoder $e, k, f$, that is, $<c^n>_1^\infty = <e^n \circ k^n \circ f^n>_1^\infty$ it follows that by use of encoder $e' \circ e = <e'^n \circ e^n>_1^\infty$ and decoder $f \circ f' = <f^n \circ f'^n>_1^\infty$ guarantee $G'$ for communication of random source $X'$ over $k$ is met. Note that it is crucial that the model for $c$ is a general channel model. This is because the composition of encoder, channel and decoder is a general channel in the sense of \cite{VerduHan} and not a `physical' channel even if $k$ is a physical channel.

Given finite sets $\mathbb X$ and $\mathbb Y$. An architecture with input space $\mathbb X$ and output space $\mathbb Y$ is a transition probability $a: \mathbb X \rightarrow \mathbb P(\mathbb Y)$. Given an architecture $a$. If $b = a \circ d$ or $b = d \circ a$, $d$ is said to be layered `above' $a$ in order to get the architecture $b$.

With this terminology, $<e', f'>$ is said to be layered `above' $<e \circ k \circ f>$. This methodology is said to be a layering methodology.


The advantage of the layering methodology is inherent in the above statement:
\begin{itemize}
\item
If a statement can be made about $e \circ k \circ f$ viewed as a general channel, then, a similar statement can be made about $k$, and further, the model of $k$ may be general.
\end{itemize}

If in addition, in order to prove that guarantee $G'$ is met over $c$, only the knowledge of the fact that source $X$ is communicated \emph{directly} with guarantee $G$ over $c$ is used, and not the internal functioning of $c$. This perspective is called a \emph{black-box perspective}. Thus, $c$, or $e \circ k \circ f$ is seen as a black box which accomplishes a certain end-to-end guarantee $G$, and then, based on the knowledge of this guarantee, and not the internal functioning of $e$, $k$ or $f$, it is proved that guarantee $G'$ is met over $k$. 

The black-box perspective has the following advantage: 
\begin{itemize}
\item
Results proved by use of the black-box perspective are, in general, \emph{universal over the channel}, that is, the channel can belong to a set. This is due to the fact that only the knowledge that source $X$ is communicated \emph{directly} with guarantee $G$ over $c$ is used. Universal results are important because channels for communication, for example, wireless channels or the internet are, in general, unknown or only partially known.
\item
Roughly speaking, the channel model does not matter. All that matters is the end-to-end description of the channel.
\end{itemize}

Finally, let end-to-end guarantee $G$ for \emph{direct} communication of source $X$ over $c$ be met. If, in order to prove that guarantee $G'$ for communication of $X'$ is met over $c$, the encoder-decoder $(e, d)$ are so constructed that $X'' \triangleq e(X')$ has the same distribution as $X$. This perspective is called a \emph{behavioral or a behavioral interconnection} perspective. The use of the phrase `behavioral perspective' is borrowed from behavioral systems theory as developed by Jan Willems \cite{Willems}, \cite{WillemsCSM}. If $Y$ is the output of $c$ when $X$ is the input, $(X,Y)$ is the behavior of the channel $c$. From the above definition, $(X', X'')$ is a behavior of $e$ where $X''$ has the same distribution as $X$. Then, by interconnecting $e$ and $c$, $(X', X, Y)$ is a behavior of $(e, c)$ Thus, \emph{behaviors are matched at the point of interconnection}. The behavioral view-point is further elaborated on, in \cite{AgarwalMItter}.

There are various advantages of the interconnection perspective. 
\begin{itemize}
\item
It is intuitively appealing: if the black-box perspective is used, the common-sense way to prove a theorem concerning communication of $X'$ over $c$ is to simulate the source $X$ at the input of the channel; in other words, behaviors are matched, and this is the behavioral, interconnection perspective. 
\item
Statements can be made regarding channel resource consumption.\linebreak  With input $X$ to $c$ the output is $Y$. The channel resource consumption, for example, energy or bandwidth consumption  is a function of the distribution $p_X$ of $X$ (this point is discussed in detail later). Now, in order to communicate $X'$ over $c$, $X'' = e(X')$ has the same distribution as $X$. Thus, channel resource consumption is the same for communication of $X'$ over $c$ as it is for the  communication of $X$ over $c$.
\end{itemize}

\emph{The layered black-box behavioral interconnection view-point will be illustrated in this paper in the following way:}
\begin{itemize}
\item
Results concerning communication with fidelity criteria over a medium will be proved
\item
An operational view of source-channel separation for communication with a fidelity criterion and a randomized covering-packing duality between source and channel coding which have been discussed in \cite{paperDuality} will be re-visited from the view-point proposed in this paper
\end{itemize}

To elaborate:
\begin{itemize}
\item
Consider a set of channels $\mathbb A$. A generic channel in this set is denoted by $c = <c^n>_1^\infty$ and has input space $\mathbb X$ and output space $\mathbb Y$; $\mathbb X$ and $\mathbb Y$ are finite sets. Such a channel is called a general, compound channel. Let it be known that $c \in \mathbb A$ communicates the i.i.d. $X$ source within a distortion $D$. Mathematically, this means the following: 
Let $X^n$ denote the i.i.d. $X$ sequence of length $n$. With input $X^n$ to $c^n$, the output is $Y^n$ such that $\exists <\omega_n>_1^\infty$, $\omega_n \in \mathbb R, \omega_n \to 0$ as $n \to \infty$ such that
\begin{align}
\Pr \left ( \frac{1}{n} d^n(X^n, Y^n) > D \right ) \leq \omega_n \quad \forall c \in \mathbb A
\end{align}
Note that $Y^n$ depends on the particular $c \in \mathbb A$ but this dependence is suppressed. It will be proved by use of a black-box, behavioral interconnection view-point  that if random codes are  permitted, the universal or the compound channel capacity of the set of channels $\mathbb A$ is  $\geq R_X(D)$ by use of the same channel resources as are used in the communication of the i.i.d. $X$ source over $c \in \mathbb A$. $R_X(D)$ is the rate-distortion function of the i.i.d. $X$. The compound capacity refers  to the capacity where the same encoder-decoder should work for every $c \in \mathbb A$ for reliable communication.
\item
  Further, by use of the above result and by the discussion on the layered black-box interconnection perspective carried out before, a universal source-channel  separation theorem for communication with a fidelity criterion will be proved where universality is over the channel. This requires the use of random codes. This is non-trivial because if random codes are not permitted, universal source channel separation for communication with a fidelity  criterion may not hold.
\item
  A generalization to the multi-user setting of the above mentioned results will be presented, again by use of the layered black-box behavioral view-point. As will be seen, the behavioral view-point is crucial here, and realizing that makes the multi-user problem an essentially trivial generalization of the point-to-point problem. In the multi-user setting, it will be assumed that independent sources need to be communicated between various source-destination pairs; more precisely, if source $X_{ij}$ needs to be communicated from user $i$ to user $j$, sources $X_{ij}$ are assumed to be independent. Note that this is the unicast setting; the same source $X_{ij}$ does not need to be reproduced at various destinations.
\item On a conceptual front,
  \begin{itemize}
\item
An operational view-point on source-channel separation theorem for communication with a fidelity criterion which uses only the operational meanings of channel capacity as the maximum rate of reliable communication and the rate-distortion  function as the minimum rate needed to code a source with a certain distortion level are used has been discussed for the uniform $X$ source in \cite{paperDuality}. The same will be discussed in brief in this paper but  the layered black-box behavioral nature of the view-point will be stressed. This is unlike  \cite{Shannon} which uses information theoretic quantities, namely, channel  capacity defined as a maximum mutual information and rate-distortion function defined as a minimum mutual information crucially. The operational view-point has the following advantage: 
\begin{itemize}
\item
A channel can be used for various purposes. For example, it can be used for reliable communication, it can be used for transmission of a control signal, and so on. An information-theoretic characterization of a channel which is used for reliable communication is done in terms of a maximum mutual information. However, if the same channel is used for control purposes, this characterization may not be justified. For this reason, view-points which use only the operational meanings of quantities, for example, channel capacity defined as the maximum rate of reliable communication over a channel for the problem of communication or the channel efficiency defined as the maximum noise variance acceptable for stabilization over a noisy communication link are desirable instead of special mathematical characterizations.
\end{itemize}
\item
A randomized covering-packing duality between source-coding and channel-coding has been discussed in \cite{paperDuality}. The same will be discussed in brief in this paper but the layered black-box behavioral nature of the view-point will be emphasized.
\item
  Another application is that the converse part of source-channel separation theorem for communication with a fidelity criterion which is proved by use of converse techniques in  for example, \cite{Shannon} is proved using achievability techniques in this paper.
\item
  An equivalence is demonstrated between the problem of reliable communication and communication within a certain distortion in the point-to-point setting.
\end{itemize}
\end{itemize}

Some of the results for point-to-point communication in this paper appeared in \cite{AgarwalSahaiMitterAllerton}. It was proved by use of a layered black-box reduction view-point that if a general compound channel communicates the i.i.d. $X$ source to a destination within a certain distortion level, reliable communication can be accomplished over the channel at rates less than the rate-distortion function corresponding to this distortion level.

{\color{red}} An individual sequences approach for this problem has been taken in \cite{LomnitzFeder}. A communication problem is considered where the model of the channel is left unspecified. The achievable rate is determined as a function of the channel input and output sequences known aposteriori. It is proved that rates less than the empirical mutual information between the input and output are achievable. Our approach generalizes to certain multi-user settings This is discussed below.

A layering black-box approach, called the concatenated approach has been used in \cite{ForneyD}. Suppose a coder and decoder is set up for some channel for block-length $n$, then the composition of coder, channel, decoder can be thought of as a larger channel, and $r$ copies of this can be thought of as a super-channel. Now, the super-channel can be thought of as a single channel. The composition of the inner code with the channel can be viewed as a black box with a certain performance guarantee, and an outer code around it can be designed by use of a layering methodology. See \cite{ForneyD} for details.

A similar view has also been used in \cite{Tian} for the problem of proving the optimality of separation for communication with fidelity criteria in a network setting. However, they require a memorylessness assumption (or more generally, a finite-state assumption) on the medium of communication and their results are not universal over the medium of communication. A more thorough comparison of \cite{Tian} with our work will be carried out later in the paper.

A reduction view has been used, for example, in  \cite{Effros1}, \cite{Effros2} where a family of equivalence tools for bounding network capacities is introduced. The idea is not to compute capacity regions but derive relations between capacity regions especially when these capacity regions remain inaccessible. In our paper, a similar reduction strategy is followed: the problem of communicating bits reliably over a channel is reduced to the problem of communication with a fidelity criterion over a channel where the only knowledge of the channel that is used is that it communicates the source within a certain distortion level (thus, for example, no knowledge of channel capacity or channel transition probability is used), and this leads to a source-channel separation theorem for communication with a fidelity criterion which can be generalized to multi-user settings.

\section{Outline of the paper}

In Section \ref{NotandDef}, basic notation used throughout the paper is presented. This is followed by the proof of the basic theorem on which all theorems in this paper are built in Section \ref{BBTT}. Section \ref{UUTT} uses the theorem from Section \ref{BBTT} in order to prove a universal source-channel separation theorem for communication with a fidelity criterion. Results of Sections \ref{BBTT} and \ref{UUTT} are generalized to the multi-user setting in Section \ref{MMTT}. Section \ref{KKKKKKK} discusses some applications of  conceptual nature. An operational view-point on source-channel separation for communication with a fidelity criterion and a randomized covering-packing duality between source and channel coding which have been discussed in \cite{paperDuality} in detail is discussed in very brief in Subsection \ref{OOVV}. Subsection \ref{AATT} provides an alternate proof of the converse part of source-channel separation for communication with a fidelity criterion by use of achievability techniques. Subsection \ref{Equ} provides an equivalence between the problem of reliable communication and that of communication within a certain distortion level. Section \ref{RR} recapitulates this paper.

\section{Basic notation and definitions}\label{NotandDef}. 

Basic notation is defined in this section. Further notation will be defined as the paper proceeds.

Superscript $n$ will denote a quantity related to block-length $n$. For example, $x^n$ will be the channel input when the block-length is $n$. The only exception to this rule is for a real number: $\omega_n$ is used corresponding to block-length $n$ if $\omega_n$ is a real number in order not to confuse $\omega^n$ with the $n^{th}$ power of $\omega$. As the block-length varies, $x = <x^n>_1^\infty$ will denote the sequence for various block-lengths.

Let $\mathbb X$ and $\mathbb Y$ be finite sets. $\mathbb X$ is the source input space. $\mathbb Y$ is the source reproduction space. $\mathbb X$ is also the channel input space and $\mathbb Y$ is also the channel output space.

Let $x^n \in \mathbb X^n$. $p_{x^n}$ will denote the empirical distribution or the type of $x^n$, that is, for $ x \in \mathbb X$,
\begin{align}
p_{x^n} \triangleq \frac{\mbox{number of $x$ in the sequence $x^n$}}{n}
\end{align}

Let $p$ be a probability distribution on $\mathbb X$.  Let $\epsilon \geq 0$. The sequence $x^n$ is said to belong to $\mathbb T(p, \epsilon)$ if
\begin{align}
\sum_{x \in \mathbb X} |p_{x^n}(x) - p(x)| \leq \epsilon
\end{align}

Let $X$ be a random variable on $\mathbb X$.  $p_X$ will denote the probability distribution corresponding to $X$. 

Let $X^n$ denote the i.i.d. $X$ sequence of length $n$. The i.i.d. $X$ source is $<X^n>_1^\infty$. 

$d: \mathbb X \times \mathbb Y \rightarrow [0, \infty)$ is a distortion function. The $n$-letter distortion function is defined additively:
\begin{align}
d^n(x^n, y^n) = \sum_{i=1}^n d(x^n(i), y^n(i)),\quad x^n \in \mathbb X^n, y^n \in \mathbb Y^n
\end{align}
where $x^n(i)$ denotes the $i^{th}$ coordinate of $x^n$ and similarly for $y^n$.

The channel is a sequence $c = <c^n>_1^\infty$ where
\begin{align}
c^n &: \mathbb X^n \rightarrow \mathbb P(\mathbb Y^n) \\
       &: x^n \rightarrow c^n(\cdot|x^n)
\end{align}
When the block-length is $n$, the channel acts as $c^n(\cdot|\cdot)$; $c^n(y^n|x^n)$ is the probability that the channel output is $y^n$ given that the channel input is $x^n$. This model is the same as the model in \cite{VerduHan}. The model of a ``physical'' channel would impose causality and nestedness among various  $c^n$ and this would be a special case of the above model. A compound channel is a set $c \in \mathbb A$ of channels with input space $\mathbb X$ and output space $\mathbb Y$. This model of a compound channel is the same as the model of a compound channel in \cite{CsiszarKorner} though the emphasis in \cite{CsiszarKorner} is on compound DMCs. This model of a compound channel also generalizes arbitrarily varying channels defined in \cite{CsiszarKorner}.

Next, source codes and channel codes are defined.

Let 
\begin{align}
\mathbb M^{n}_R \triangleq \{1, 2, \ldots, 2^{\lfloor nR \rfloor} \}
\end{align}
$\mathbb M^{n}_R$ is the message set. When the block-length is $n$, a rate $R$ deterministic source encoder  is $e_s^{n}: \mathbb X^{n} \rightarrow \mathbb M_R^{n}$ and a rate $R$ deterministic source decoder $f_s^{n}: \mathbb M_R^{n} \rightarrow \mathbb Y^{n}$.  $(e_s^{n}, f_s^{n})$ is the block-length $n$ rate $R$  deterministic source-code. The source-code is allowed to be random in the sense that encoder-decoder is a joint probability distribution on the space of deterministic encoders and decoders. $<e_s^{n}, f_s^{n}>_1^\infty$ is the rate $R$ source-code. The classic argument used in \cite{Shannon} to prove the achievability part of the rate-distortion theorem uses a random source code.

When the block-length is $n$, a rate $R$ deterministic channel encoder is a map $e_{ch}^{n}:\mathbb M_R^{n} \rightarrow \mathbb X^{n}$ and a rate $R$ deterministic channel decoder is a map $f_{ch}^{n}: \mathbb Y^{n} \rightarrow \hat {\mathbb M}_R^{n}$ where $\hat {\mathbb M}_R^{n} \triangleq \mathbb M_R^{n} \cup \{e\}$ is the message reproduction set where `e' denotes error. The encoder and decoder are allowed to be random in the sense discussed previously. $<e_{ch}^{n}, f_{ch}^{n}>_1^\infty$ is the rate $R$ channel code. The classic argument used in \cite{ShannonReliable} to derive the achievability of the mutual information expression for channel capacity uses a random channel code.

The source-code $<e_s^{n'}, f_s^{n'}>_1^\infty$ is said to code the i.i.d. $X$ source within a probability of excess distortion $D$ if with input $X^{n}$ to $e_s^{n} \circ f_s^{n}$, the output is $Y^{n}$ such that
\begin{align} \label{SourceCodeD}
\lim_{n \to \infty} \Pr \left ( \frac{1}{n} d^{n}(X^{n}, Y^{n}) > D \right ) = 0
\end{align}
(\ref{SourceCodeD}) is the probability of excess distortion criterion.
The infimum of rates needed to code the source $X$ source within the distortion $D$ under the probability of excess distortion criterion is the rate-distortion function $R^P_X(D)$.

The source-code $<e_s^{n'}, f_s^{n'}>_1^\infty$ is said to code the i.i.d. $X$ within an expected distortion $D$ if with input $X^{n}$ to $e_s^{n} \circ f_s^{n}$, the output is $Y^{n}$ such that
\begin{align} \label{SourceCodeED}
\lim_{n \to \infty} E \left [ \frac{1}{n} d^n(X^n, Y^n \right ] \leq D
\end{align}

(\ref{SourceCodeD}) is the expected distortion criterion. The infimum of rates needed to code the i.i.d. $X$ source within distortion $D$ under the expected distortion criterion is the rate-distortion function $R^E_X(D)$.

The information theoretic rate-distortion function is defined as 
\begin{align}\label{JInformationTheoreticRateDistortionFunction}
R^I_X(D) \triangleq \inf_{ \left \{ p_{Y|X}\ : \   \sum_{x \in \mathbb X, y \in \mathbb Y} p_X(x)p_{Y|X}(y|x) \leq D \right \} } I(X;Y) 
\end{align}
It is known that $R^P_X(D), R^E_X(D)$ and $R^I_X(D)$ are equal.

Next, reliable communication is considered.

Denote 
\begin{align}
g = <g^{n}>_1^\infty  \triangleq <e_{ch}^{n} \circ c^{n} \circ f_{ch}^{n}>_1^\infty
\end{align}
$g^n$ has input space $\mathbb M_R^{n}$ and output space $\hat {\mathbb M}_R^{n}$. Consider the set of channels
\begin{align}
\mathbb G_{\mathbb A} \triangleq \{ e \circ c \circ f \ | \ c \in \mathbb A \} 
\end{align}
$g \in \mathbb G_{\mathbb A}$ is a compound channel.  Rate $R$ is said to be reliably achievable over $g \in \mathbb G_{\mathbb A}$ if there exists a  rate $R$ channel code $<e_{ch}^{n}, f_{ch}^{n}>_1^\infty$  and a sequence $<\delta_n>_1^\infty$, $\delta_n \to 0$ as $n \to \infty$ such that 
\begin{align}
\sup_{m^{n} \in \mathbb M_R^{n}} g^{n}(\{ m^{n}\}^c|m^{n}) \leq \delta_n \quad \forall c \in \mathbb A
\end{align}
Supremum of all achievable rates is the capacity of $c \in \mathbb A$. \emph{Note that this is the compound capacity, but will be referred to as just the capacity of $c \in \mathbb A$.}

Next, a channel which communicates a source directly within a certain distortion is defined. 

The channel $c \in \mathbb A$ is said to communicate the source $X$ \emph{directly} within distortion $D$ if with input $X^{n}$ to $c^{n}$, the output is $Y^{n}$ (possibly depending on the particular $c \in \mathbb A$) if such that $\exists$ $\omega_n \to 0$ as $n \to \infty$ such that 
\begin{align} 
\Pr \left ( \frac{1}{n} d^{n}(X^{n}, Y^{n}) > D \right )  \leq \omega_n\quad  \forall c \in \mathbb A
\end{align}

In this paper, only the end-to-end description of a channel $<c^{n}>_1^\infty$ which communicates  the i.i.d. $X$ source directly within distortion $D$ is used and not the particular $c^{n}$; for this  reason, the {  general channel} should be thought of as a \emph{black-box} which communicates  the i.i.d. $X$ source within distortion $D$.

Next, channel resource consumption, for example, bandwidth consumption, is considered.

In the above, the input to the channel $c$ is i.i.d. $X$. The bandwidth consumption is a function only of $p_X$. In general, it is assumed that any kind of channel resource consumption depends only on $p_X$. Thus, if the channel is used for another purpose and the input to the channel is still i.i.d. $X$, the new architecture will be said to have the same channel resource consumption as the original architecture.

This finishes the section on basic notation and definitions.

\section{Basic theorem} \label{BBTT}

In this section, the basic theorem on which other theorems are built is proved and the proof is followed by comments.
\subsection{Basic theorem and its proof} \label{KKiidiidKK}
\begin{thm}\label{BasicTheorem}
Let $c \in \mathbb A$ be a set of channels which directly communicate the i.i.d. $X$ source within distortion $D$. Then, assuming random codes are permitted, the capacity of this set of channels is $\geq R^I_X(D)$. This reliable communication can be accomplished by use of the same channel resources as consumed in the communication of the i.i.d. $X$ source over $c \in \mathbb A$.
\end{thm}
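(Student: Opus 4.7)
The plan is to reduce reliable communication over the compound channel $\mathbb A$ to a random-coding argument in which each channel input is, by construction, distributed as the i.i.d.\ $X$ source, so that the hypothesized distortion guarantee can be invoked in exactly the same form for every $c \in \mathbb A$---this behavioral-matching step is what makes the argument universal over $\mathbb A$. Fix $R < R^I_X(D)$ and draw a random codebook $\{\tilde X^n(m) : m \in \mathbb M_R^n\}$ of $2^{\lfloor nR \rfloor}$ independent length-$n$ i.i.d.\ $p_X$ sequences; set $e_{ch}^n(m) := \tilde X^n(m)$ and let $f_{ch}^n(y^n)$ output the unique $\hat m$ with $\tfrac{1}{n}d^n(\tilde X^n(\hat m), y^n) \le D$, declaring error otherwise.

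Under message $m$, the channel input is i.i.d.\ $p_X$, so for every $c \in \mathbb A$ the hypothesis gives
\begin{align}
\Pr\Bigl(\tfrac{1}{n}d^n(\tilde X^n(m), Y^n) > D\Bigr) \;\le\; \omega_n \;\longrightarrow\; 0,
\end{align}
bounding the ``correct codeword falls outside its distortion ball'' contribution to the error. For the ``wrong codeword lands in the ball'' contribution, note that for $m' \ne m$ the codeword $\tilde X^n(m')$ is i.i.d.\ $p_X$ and independent of $(\tilde X^n(m), Y^n)$, so for every realization $y^n \in \mathbb Y^n$,
\begin{align}
\Pr\Bigl(\tfrac{1}{n}d^n(\tilde X^n(m'), y^n) \le D\Bigr) \;=\; p_X^n\bigl(B_D(y^n)\bigr),
\end{align}
where $B_D(y^n) := \{x^n \in \mathbb X^n : \tfrac{1}{n}d^n(x^n, y^n) \le D\}$.

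The crux is a uniform ball-mass lemma: there exists $\epsilon_n \to 0$ with
\begin{align}
\sup_{y^n \in \mathbb Y^n} p_X^n\bigl(B_D(y^n)\bigr) \;\le\; 2^{-n(R^I_X(D) - \epsilon_n)}.
\end{align}
I would prove this by the method of types: grouping sequences $x^n \in B_D(y^n)$ by their joint type with $y^n$ (of type $q := p_{y^n}$) gives
\begin{align}
p_X^n\bigl(B_D(y^n)\bigr) \;\le\; (\mathrm{poly}\,n) \cdot 2^{-n \phi(q)},
\end{align}
where $\phi(q) := \inf\{D(p_{XY}\,\|\,p_X \times q) : p_Y = q,\ E[d(X,Y)] \le D\}$, and a standard variational/convex-duality calculation identifies $\inf_q \phi(q)$ with $R^I_X(D)$, giving $\phi(q) \ge R^I_X(D)$ uniformly in $q$. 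A union bound over the $2^{\lfloor nR \rfloor}$ wrong codewords then yields expected (over the random code) error at most $\omega_n + 2^{\lfloor nR \rfloor - n(R^I_X(D) - \epsilon_n)}$, which vanishes whenever $R < R^I_X(D)$.

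Universality and the resource claim then come for free: the ball-mass estimate is combinatorial in $(x^n, y^n)$ and does not depend on $c$, so the bound holds uniformly for every $c \in \mathbb A$, yielding random-code compound capacity $\ge R^I_X(D)$; and because the channel-input marginal under every message is exactly $p_X^n$, channel-resource consumption matches that of the hypothesized direct transmission of i.i.d.\ $X$ through any $c \in \mathbb A$. The main obstacle is the uniform-in-$y^n$ ball-mass lemma, which is the \emph{packing} side of the covering/packing duality for rate-distortion; once in place, the rest is a routine random-coding union bound, and no derandomization is required since random codes are permitted.
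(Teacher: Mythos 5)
Your overall architecture matches the paper's: i.i.d.\ $p_X$ codebook (so the channel input is behaviorally matched to the hypothesis, giving universality over $\mathbb A$ and the resource claim), error split into ``transmitted codeword outside the ball'' (killed by the distortion guarantee) and ``wrong codeword inside the ball'' (killed by a union bound over a type-counting estimate). However, your decoder omits the typicality test on the candidate codeword that the paper's decoder includes, and as a consequence your key ``uniform ball-mass lemma'' is false. The claim $\sup_{y^n} p_X^n(B_D(y^n)) \le 2^{-n(R^I_X(D)-\epsilon_n)}$ fails already for binary alphabets with Hamming distortion, $p_X=(0.9,0.1)$, $D=0.05$: here $R^I_X(D)=h(0.1)-h(0.05)\approx 0.183$, yet $p_X^n(B_D(0^n))=\Pr(\mathrm{Bin}(n,0.1)\le 0.05n)\doteq 2^{-nD((0.95,0.05)\Vert(0.9,0.1))}\approx 2^{-0.024n}$. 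The error in the ``standard variational calculation'' is a swap of which marginal is free: minimizing $D(q_{ZY}\Vert p_X\times q_Y)=D(q_Z\Vert p_X)+I_q(Z;Y)$ over joint types with $E_q[d]\le D$ and \emph{unconstrained} $Z$-marginal gives $\inf_{q_Z}\bigl[D(q_Z\Vert p_X)+R^I_{q_Z}(D)\bigr]$, which is generally strictly smaller than $R^I_X(D)$ (the identity $\inf_q\phi(q)=R^I_X(D)$ holds when the $X$-marginal is pinned to $p_X$ and the \emph{output} marginal is optimized, i.e., on the covering side, not here). Intuitively, a wrong codeword can pay a small price $D(q_Z\Vert p_X)$ to have an atypical type for which the ball is cheap to enter, and this trade-off undercuts $R^I_X(D)$.

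This is not merely a loose bound: your scheme genuinely fails at some rates below $R^I_X(D)$. Take the channel that maps each input $x^n$ to a nearest $y^n$ of type $(0.95,0.05)$ (feasible within distortion $0.05$ of any type-$(0.9,0.1)$ input, so this channel directly communicates the source within distortion $D$). For such $y^n$ one can exhibit a feasible joint type with divergence about $0.115<0.183$, so each of the $2^{nR}$ wrong codewords lands in $B_D(y^n)$ independently with probability about $2^{-0.115n}$; for $0.115<R<0.183$ the number of impostors in the ball tends to infinity in probability and your unique-decoder errs with probability tending to one. The fix is exactly the paper's: declare $\hat m$ only if $\tilde X^n(\hat m)\in\mathbb T(p_X,\epsilon)$ \emph{and} $\tfrac1n d^n(\tilde X^n(\hat m),y^n)\le D$. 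The typicality constraint forces $q_Z\approx p_X$ in the exponent, whence $D(q_{ZY}\Vert p_X q_Y)\ge I_q(Z;Y)\ge R^I_{q_Z}(D)\to R^I_X(D)$ by continuity, and it costs nothing on the $\mathbb E_1$ side since the transmitted codeword is typical with high probability. With that single modification the rest of your argument (union bound over types and codewords, uniformity in $c\in\mathbb A$, behavioral resource matching) goes through as in the paper.
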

\begin{proof}
A random-coding technique will be used in order to prove the theorem.

\emph{Codebook generation:} Generate $2^{\lfloor nR \rfloor}$ codewords i.i.d. $X$. This is the codebook $\mathbb K^n$.

\emph{Joint typicality:} $(x^n, y^n)$, $x^n \in \mathbb X^n$, $y^n \in \mathbb Y^n$, are said to be $\epsilon$ jointly typical if 
\begin{itemize}
\item
$x^n$ is $p_X$ typical, that is, $x^n \in \mathbb T(p_X, \epsilon)$
\item
$\frac{1}{n}d^n(x^n, y^n) \leq D$
\end{itemize}

\emph{Decoding:} Let $y^n$ be received at the output of the channel. If $\exists$ unique $x^n \in $ the codebook $\mathbb K^n$ such that $(x^n, y^n)$ is $epsilon$ jointly typical, declare that $x^n$ is transmitted, else declare error.

For block-length $n$, the encoder-decoder pair will be denoted by $( e_{ch}^n, f_{ch}^n)$.

Denote: $<g_c^n>_1^\infty = < e_{ch}^n \circ c^n \circ f_{ch}^n>_1^\infty$.

Denote: 
\begin{itemize}
\item $x^n \in \mathbb K^n$ is the transmitted codeword corresponding to a a particular message $m^n$
\item $y^n \in \mathbb Y^n$ is the received sequence 
\item $z^n \in \mathbb K^n$ is a codeword which is \emph{not} transmitted
\end{itemize}

$g_c^n(\{m^n\}^c|m^n)$ needs to be calculated.

\begin{align}
g_c^n(\{m^n\}^c|m^n) \leq  \Pr(\mathbb E_1^n  \cup \mathbb E_2^n)
                                                                     \leq \Pr(\mathbb E_1^n) + \Pr(\mathbb E_2^n)
\end{align}
where 
\begin{itemize}
\item
$\mathbb E_1^n$:  $(x^n, y^n)$ is not $\epsilon$ jointly typical
\item
$\mathbb E_2^n$: $\exists z^n \neq x^n$ such that $(z^n, y^n)$ is $\epsilon$ jointly typical
\end{itemize}

By definition of $c \in \mathbb A$, $\Pr(\mathbb E_1^n) \to 0$ as $n \to \infty$ at a uniform rate over $c \in \mathbb A$, and independently of $m^n$.

Calculation of $\Pr(\mathbb E_2^n)$ requires a method of types calculation, see \cite{CsiszarKorner} or \cite{CoverThomas}, similar in arguments to those used in the proof of Lemma 1 in \cite{LomnitzFeder} which, in turn, are similar to the error analysis in the conference version \cite{AgarwalSahaiMitterAllerton}. This calculation is carried out below:

Fix $y^n$. Let $y^n$ have type $q_Y$.  Let $z^n$ be a particular non-transmitted codeword. Recall that $z^n$ is generated i.i.d. $X$ and is independent of $y^n$.  Let $q_{Z|Y}$ be a conditional distribution $\mathbb Y \rightarrow \mathbb P(\mathbb X)$. $q_{ZY}$ is the joint distribution resulting from $q_Y$ and $q_{Z|Y}$. Consider the set
\begin{align}
\{z^n \in \mathbb X^n\ | \ p_{z^n|y^n} = q_{Z|Y} \}
\end{align}
By Sanov's Theorem \cite{CoverThomas}, it follows that the probability of this set under the distribution $p_{X^n}$ is 
\begin{align}
\leq & \prod_{y \in \mathbb{Y}} 2^{-nq_Y(y)D(q_{Z|Y}(\cdot|y)||p_X)} \nonumber \\
=     & 2^{-n \sum_{y \in \mathbb{Y}}q_Y(y)D(q_{Z|Y}(\cdot|y)||p_X)} \nonumber \\
=     & 2^{-n D(q_{ZY}||p_X q_Y)}
\end{align}
It follows by
\begin{itemize}
\item
noting that $q_Y$ is arbitrary,
\item
number of joint types on $\mathbb X \times \mathbb Y$ is $\leq (n+1)^{|\mathbb X||\mathbb Y|}$,
\item
and by use of the union bound,
\end{itemize}
that 
\begin{align}\label{FinalE2Bound}
  \Pr(\mathbb E_2^n) \leq (n+1)^{|\mathbb{X}||\mathbb{Y}|} 2^{\lfloor nR \rfloor}  2^{-n \inf _ {q_{ZY}\in \mathbb{T}} D(q_{ZY}||p_X q_Y) }
\end{align}
where
\begin{align}
\mathbb{T} \triangleq  \left \{ q_{ZY}: 
       \begin{array}{l} q_Z \in p_X \pm \epsilon, \mbox{that is,} \sum_{x \in \mathbb X} |p_X(x) - q_Z(x)| \leq \epsilon \\
         \sum_{x \in \mathbb X, y \in \mathbb Y} q_{ZY}(x, y)d(x, y) \leq D \\
       \end{array} \right \}
\end{align}
The above bound on $\Pr(\mathbb E_2^n)$ is independent of $m^n$ and $c \in \mathbb A$. As stated before, $\Pr(\mathbb E_1^n) \to 0$ as $n \to \infty$ at a uniform rate over $c \in \mathbb A$, and independently of $m^n$. 

Thus, rates $R < \inf_ {q_{ZY} \in \mathbb T} D(q_{ZY}||p_X q_Y)$ are reliably achievable over $c \in \mathbb A$. Now,
\begin{align}
& D(q_{ZY}||p_X q_Y) =  
D(q_{Z}||p_X) + D(q_{ZY}||q_Z q_Y)  \geq    D(q_{ZY}||q_Z q_Y)
\end{align}
Thus, rates 
\begin{align}
R < \inf_ {q_{ZY} \in \mathbb T} D(q_{ZY}||q_Z q_Y) =  \inf_ {q_{ZY} \in \mathbb T} I(Z;Y)
\end{align}
are reliably achievable over $c \in \mathbb A$. Now,
\begin{align}
 \inf_{q_{ZY} \in \mathbb T} I(Z;Y) =  \inf_{Z \in \mathbb T(p_X, \epsilon)} R^I_Z(D)
\end{align}
Thus, rates 
\begin{align}
R < \inf_{Z \in \mathbb T(p_X, \epsilon)} R^I_Z(D) 
\end{align}
are reliably achievable over $c \in \mathbb A$. Now, $\epsilon > 0$ is arbitrary and $R^I_X(\cdot)$ is continuous. Thus, rates $R <  R^I_X(D)$ are reliably achievable over $c \in \mathbb A$. 

{
Note that the code  uses random codes generated with distribution $p_X$. It follows from the discussion in Section \ref{NotandDef} that resource consumption for the architecture for reliable communication consumes the same channel resources as needed for communication of the i.i.d. $X$ source.}

This finishes the proof of Theorem \ref{BasicTheorem}.
\end{proof}

\subsection{Comments}

Note that the argument uses a black-box, interconnection view-point:
\begin{itemize}
\item
The decoding rule uses only the input-output characterization of the channel that i.i.d. $X$ source is communicated over the channel with distortion $D$ and not the particular $c^n$ of the channel $c = <c^n>_1^\infty$. Thus, the perspective is a black-box perspective
\item
The encoder output is i.i.d. $X$ and has the same distribution as the input to the channel which communicates the i.i.d. $X$ source within distortion $D$. Thus, the perspective is a behavioral interconnection perspective
\end{itemize}

Note that the channel resource consumption argument crucially uses the behavioral interconnection view-point.

Note further, that the channel resource consumption in the system is being considered. The \emph{total} energy consumption in the architecture for reliable communication over $c$ in the above argument will, in general, be larger than the resource consumption in the original architecture because energy is consumed in the circuits in the encoder and decoder

Note that the argument holds for a channel which evolves continuously in space and time, too. This is because of the black-box nature of the proof. 

{ In fact, it is also true that for every $\epsilon > 0$, there exists a channel $c$ which directly communicates the i.i.d. $X$ source within distortion $D$ but its capacity is less than $R^I_X(D) + \epsilon$. A rough argument is the following: Consider a rate $R^I_X(D) + \epsilon$ deterministic source-code $<e^n, f^n>_1^\infty$ which codes the i.i.d. $X$ source within distortion $D$. Such a source-code exists by the rate-distortion theorem. Then, $<e^n \circ f^n>_1^\infty$ is a channel which directly communicates the i.i.d. $X$ source within distortion $D$. It can be proved that the capacity of this channel is $\leq R^I_X(D) + \epsilon$. This is because the output of $e^n$ has cardinality $2^{n(R^I_X(D) + \epsilon)}$. A full proof is omitted because it is not needed for the results proved in this paper.
}

The use of random codes is crucial to the argument. In Shannon's original argument \cite{ShannonReliable}, the existence of a random code implies the existence of a  deterministic code. However, the same is not the case here because the channel belongs to a set. In fact, if random codes are not permitted, the above theorem does not hold, see \cite{MukulSwastikMitter}.


Note that the channel model is completely general but the source model is i.i.d.The proof for general stationary ergodicity sources which satisfy a mixing requirement can be found in \cite{paperTogetherErgodic}.

\section{Universal source-channel separation theorem for communication with a fidelity criterion} \label{UUTT}

In this section, a universal source-channel separation theorem for communication with a fidelity criterion is proved. Universality is over the channel, that is channel may belong to a set; in other words, the channel may be compound. The source model is assumed to be i.i.d. 

Source-channel separation for i.i.d. sources (or in general, sources satisfying an ergodicity requirement) was proved by Shannon \cite{Shannon}. Various generalizations exist, for example, \cite{Elkayam} proves a source-channel separation for sources satisfying a sphere packing optimality condition and general channels. The contribution of this section is to prove results for compound, { general} channels and further, the results can be generalized to the multi-user setting as will be seen in the next section.

First, some notation particular to this section is defined.  This is followed by a statement and proof of the separation theorem followed by comments.

\subsection{Some notation}

Channels with input space $\mathbb I$ and output space $\mathbb O$ will be considered where $\mathbb I$ and $\mathbb O$ are finite sets. Channels with input space $\mathbb I$ and $\mathbb O$ will be denoted by $k$. The channel is a sequence $k = <k^n>_1^\infty$ where
\begin{align}
k^n &: \mathbb I^n \rightarrow \mathbb P(\mathbb O^n) \\
       &: \iota^n \rightarrow k^n(\cdot|\iota^n)
\end{align}

A channel set $k \in \mathbb B$ is said to \emph{be capable of} of communicating the i.i.d. $X$ source within distortion $D$ if there exist encoder $e = <e^n>_1^\infty$ with input space $ \mathbb X$ and output space $\mathbb I$ and a decoder with input space $\mathbb O$ and output space $\mathbb Y$, independent of the particular $k \in \mathbb A$ such that $c  \in \mathbb A \triangleq \{ <e^n \circ k^n \circ f^n>_1^\infty \ | k \in \mathbb B\}$ directly communicates the i.i.d. $X$ source with distortion $D$. The encoder and decoder are allowed to be random.

Note the difference between a channel which directly communicates the i.i.d. $X$ source within distortion $D$ and a channel which is capable of communicating the i.i.d. $X$ source within distortion $D$. 

The channel $k$ should be thought of as a `physical' channel even though the result of the next subsection will hold for general $k$.

Optimality of source-channel separation for communication with a fidelity criterion refers to the fact that given a channel $k \in \mathbb B$ which is capable of communicating the i.i.d. $X$ source within distortion $D$, the channel $k \in \mathbb B$ is also capable of communicating the i.i.d. $X$ source via a source-channel separation based architecture. That is, there exist source encoder-decoder $<e_s^n, f_s^n>_1^\infty$ and channel encoder-decoder $< e_{ch}^n, f_{ch}^n>_1^\infty$ such that $\{ e_s^n \circ  e_{ch}^n \circ k^n \circ f_{ch}^n \circ f_s^n>_1^\infty\ | \ k \in \mathbb B \}$ directly communicates the i.i.d. $X$ source within distortion $D$ and further, with the use of $< e_{ch}^n, f_{ch}^n>_1^\infty$, reliable communication happens over $k \in \mathbb B$.

The importance of separation architectures is discussed in \cite{GallagerDigital}.

\subsection{Source-channel separation}

In this subsection, a universal source-channel separation theorem for communication with a fidelity criterion is proved in the point-to-point setting, where the universality is over the channel. This is followed by comments.

\begin{thm}[Optimality of separation] \label{KKUniXKK}
Assume that random codes are permitted. Consider a set of channels $k \in \mathbb B$. Let $k \in \mathbb B$ be capable of communicating the i.i.d. $X$ source within distortion $D$. Then, reliable communication can be accomplished over $k \in \mathbb B$ at rates $<R^I_X(D)$. This reliable communication can be accomplished with consumption of channel resources which is the same as the channel resource consumption in the original architecture which communicates the i.i.d. $X$ source with distortion $D$ over  $k \in \mathbb B$. 

Further, if  reliable communication can be accomplished over $k \in \mathbb B$ at a certain rate strictly $>R^I_X(D)$, then the i.i.d. $X$ source can be communicated with  distortion $D$ over $k \in \mathbb B$ by use of a separation architecture. The channel resource consumption in this separation architecture is the same as the channel resource consumption in the architecture for reliable communication at rate strictly $>R^I_X(D)$ when the distribution on the message set is uniform.
\end{thm}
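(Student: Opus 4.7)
By hypothesis, there exist (possibly random) maps $e, f$ such that the compound channel $\mathbb A \triangleq \{e \circ k \circ f \mid k \in \mathbb B\}$ directly communicates i.i.d.\ $X$ within distortion $D$. For the first (direct) half of the theorem, my plan is to apply Theorem \ref{BasicTheorem} to $\mathbb A$: this produces, for any $R < R^I_X(D)$, a random channel code $(e_{ch}, f_{ch})$ achieving reliable communication over every $c \in \mathbb A$. Layering $(e_{ch}, f_{ch})$ atop $(e, f)$ yields the composite $(e_{ch} \circ e,\; f \circ f_{ch})$ for reliable communication over every $k \in \mathbb B$. The resource claim then follows from the construction inside Theorem \ref{BasicTheorem}: the codebook is drawn i.i.d.\ $X$, so with a uniform message the output of $e_{ch}$ is marginally i.i.d.\ $X$, and consequently the input distribution to $k$ coincides with the one in the original architecture that communicates i.i.d.\ $X$ directly over $k$.

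For the separation (second) half, fix a rate $R > R^I_X(D)$ at which reliable communication is possible over $\mathbb B$, with channel code $(e_{ch}, f_{ch})$. Since $R^P_X(D) = R^I_X(D) < R$, the rate-distortion theorem supplies a rate-$R$ random source code $(e_s, f_s)$ coding i.i.d.\ $X$ within distortion $D$. The candidate separation architecture is $e_s \circ e_{ch} \circ k \circ f_{ch} \circ f_s$. Writing $M = e_s(X^n)$ and $\hat M$ for the output of $f_{ch}$, reliability gives $\Pr(\hat M \neq M) \to 0$ uniformly over $\mathbb B$; on the event $\{\hat M = M\}$ the end-to-end distortion equals the source-coding distortion, and a union bound combined with the distortion property of $(e_s, f_s)$ yields the distortion-$D$ criterion. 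This establishes the qualitative separation claim.

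The hard part is matching channel resource consumption. In the separation architecture the input to $k$ is $e_{ch}(e_s(X^n))$, whereas in the reliable architecture with uniform messages it is $e_{ch}(M_{\mathrm{unif}})$; equality in distribution requires $e_s(X^n)$ to be uniform on $\mathbb M_R^n$, which need not be the case for an arbitrary rate-distortion code. My plan is to enforce this by a behavioral matching step on the source code: starting from any rate-$R$ source code meeting distortion $D$, relabel the message set by an independent uniformly random permutation $\pi$ of $\mathbb M_R^n$, applied at the encoder as $\pi \circ e_s$ and at the decoder as $f_s \circ \pi^{-1}$. The end-to-end source behavior is unchanged, the encoder output is exactly uniform on $\mathbb M_R^n$, and since random codes are permitted this is a legitimate source code. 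The channel input distribution in the separation architecture then matches that of the reliable architecture with uniform messages, so the resource consumptions coincide. This is precisely the behavioral interconnection step emphasized in the introduction: we engineer the source encoder so that its output behavior at the point of interconnection matches what the downstream channel code was designed for, and no machinery beyond Theorem \ref{BasicTheorem} and the classical rate-distortion theorem is needed.
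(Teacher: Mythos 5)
Your proposal is correct and follows essentially the same route as the paper: the direct part layers the random code of Theorem \ref{BasicTheorem} on top of the given $(e,f)$ to form $(e_{ch}\circ e,\, f\circ f_{ch})$, and the converse part is source coding followed by channel coding, with the uniform random permutation of the message set playing exactly the role of the paper's ``symmetrically permuted codebook'' to match the channel input distribution for the resource claim. Your write-up is in fact somewhat more explicit than the paper's on the permutation step and the union-bound distortion argument, but there is no substantive difference in approach.
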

\begin{proof}
Channel set $k \in \mathbb A$ is capable of communicating the i.i.d. $X$ source within distortion $D$. This is accomplished with the help of some encoder-decoder $<e^n, f^n>_1^\infty$. Consider the channel set
\begin{align}
c \in \mathbb A \triangleq \{ <e^n \circ k^n \circ f^n>_1^\infty, k \in \mathbb A\}
\end{align}
Channel set $c \in \mathbb A$ is a compound channel which directly communicates the i.i.d. $X$ source within distortion $D$. By Theorem \ref{BasicTheorem}, it follows that reliable communication can be accomplished over $c$ at rates $<R^I_X(D)$ by use of some encoder-decoder $<E^n, F^n>_1^\infty$. It follows that reliable communication at rates $<R^I_X(D)$ can be accomplished over $k \in \mathbb A$ at rates $<R^I_X(D)$ by use of encoder-decoder $< e_{ch}^n, f_{ch}^n>_1^\infty \triangleq <E^n \circ e^n, f^n \circ F^n>_1^\infty$. The argument concerning resource consumption is the same as in the proof of Theorem \ref{BasicTheorem}. 

The proof of the second part of the theorem is the usual argument of source coding followed by channel coding. Briefly, the argument is the following: let rate $R^I_X(D) + \epsilon$ be reliably achievable over $k \in \mathbb A$. Encode the i.i.d. $X$ source within distortion $D$ at rate $R^I_X(D) + \epsilon$. This can be done because $R^P_X(D) = R^I_X(D)$. Communiate this rate $R^I_X(D) + \epsilon$ message reliably over $k \in \mathbb A$. Then, source-decode the message reproduction. End-to-end, the i.i.d. $X$ source is communicated within a distortion $D$ over $k \in \mathbb A$. A detailed argument is omitted. Since random codes are permitted, by using a symmetrically permuted codebook, the distribution on the output of the source encoder can be made uniform. It follows by an argument similar to the resource consumption argument before that the channel resource consumption in the separation architecture is the same as the channel resource consumption in the architecture for reliable communication at rate $R^I_X(D) + \epsilon$ when the distribution on the message set is uniform. A full argument is omitted.

\end{proof}

\subsection{Comments}

The argument uses a layered black-box behavioral interconnection perspective:
\begin{itemize}
\item
The perspective is black-box behavioral interconnection based for the same reason as in the previous section.
\item
The perspective is layered because $<E^n, F^n>_1^\infty$ is layered `on top of' $<e^n, f^n>_1^\infty$ in order to get the encoder-decoder $< e_{ch}^n, f_{ch}^n>_1^\infty = <E^n \circ e^n, f^n \circ F^n>_1^\infty$
\end{itemize}

Optimality of source-channel separation for universal communication with a fidelity criterion fails to hold if random codes are not permitted for the same reason as the previous section. The details are left to a future paper.

{ Note: \cite{VerduHan} provides an example of a non-ergodic channel over which source-channel separation does not hold. Roughly, the example is the following: consider a binary channel where the output codeword is equal to the transmitted codeword with probability $\frac{1}{2}$ and independent of the transmitted codeword with probability $\frac{1}{2}$. If the input to the channel is a bit source and the distortion measure is hamming, the expected distortion between input and output is approximately $\frac{1}{4}$. However, the capacity of this channel is zero and communication at distortion $\frac{1}{4}$ cannot be carried out via a source-channel separation architecture. This example seems to contradict the result in this paper. This is not the case because \cite{VerduHan} uses the expected distortion criterion whereas the probability of excess distortion criterion is used in this paper.}

The results of the previous  section and this section are generalized to the multi-user setting in the next section.

\section{Generalization to the unicast multi-user setting} \label{MMTT}

The source-channel separation theorem for communication with a fidelity criterion when the medium model is compound and general is proved. The source model is assumed to be i.i.d.

If the medium model is assumed to be memoryless, and sources are independent of each other, a source-medium separation for the rate-distortion problem was proved in \cite{Tian}. If sources are not independent of each other, it is known that source-channel separation is not optimal, see \cite{Gastpar}, Chapter 1. The contribution of this section is to prove separation when the medium model is general and compound. { This result on optimality of separation in the multi-user setting is not known, for either a general medium, or a compound medium, even separately, to the best of knowledge of the authors.}

The discussion in this section will be brief. This is because the results are a simple generalization of the point-to-point setting.

First, some notation particular to this section is defined. This is followed by a statement and proof of the separation theorem followed by comments.

\subsection{Basic theorem for the unicast multi-user setting}

\subsubsection{Basic notation for the multi-user setting}

There are $N$ users.

Let $\mathbb X_{ij}, \mathbb Y_{ij}$ $1 \leq i,j \leq N, i \neq j$ be finite sets. 
The source input spaces are $\mathbb X_{ij}$ and source reproduction spaces are $\mathbb Y_{ij}$. $\mathbb X_{ij}$ are also the channel input spaces and $\mathbb Y_{ij}$ are also the channel reproduction spaces.

In the above, $i \neq j$ and this will continue to be the case. This assumption will not be explicitly stated. Further, unless there is the possibility of confusion, $1 \leq i, j \leq N$ will not be stated either.

 Let $X_{ij}$ be a random variable of $\mathbb X_{ij}$. $X^n_{ij}$ is the i.i.d. $X_{ij}$ sequence of length $n$. $<X^n_{ij}>_1^\infty$ is the i.i.d. $X_{ij}$ source. Source $X_{ij}$ is communicated from user $i$ to user $j$. 

$d_{ij}: \mathbb X_{ij} \times \mathbb Y_{ij} \rightarrow [0, \infty )$ is a distortion function. The $n$ letter distortion function is defined additively as in Section \ref{NotandDef}. $d_{ij}$ is the distortion function for communication from user $i$ to user $j$.

The medium is a transition probability $<w^n>_1^\infty$ where
\begin{align} \label{GeneralTP}
w^n\ : \ \prod_{i, j = 1}^N \mathbb X_{ij}^n \rightarrow \mathbb P \left ( \prod_{i,j=1}^N \mathbb Y_{ij}^n \right )
\end{align}
with the meaning of this transition probability the obvious generalization from Section \ref{NotandDef}.

As in the point-to-point setting, $w = <w^n>_1^\infty \in \mathbb A$ for some set $\mathbb A$ with input spaces $\mathbb X_{ij}$ and output spaces $\mathbb Y_{ij}$.

$w \in \mathbb A$ is said to \emph{directly} communicate the i.i.d. $X_{ij}$ sources within distortions $D_{ij}$ if with inputs $X_{ij}$, $1 \leq j \leq N$ at user $i$, the output at user $j$ corresponding to source $X_{ij}$ is $Y_{ij}$ such that 
\begin{align}
\Pr \left ( \frac{1}{n} d^n_{ij}(X^n_{ij}, Y^n_{ij} > D_{ij} \right ) \leq \omega_{n, ij} \quad \forall i, j, \ \forall w \in \mathbb A
\end{align}
for some sequences $\omega_{ij, n} = <\omega_{ij, n} >_1^\infty$, $\omega{ij,n} \to 0$ ans $n \to \infty$ for all $i,j$.

Reliable communication from user $i$ to user $j$ is defined as in the point-to-point setting; messages transmitted from user $i$ to user $j$, $1 \leq i, j \leq N$ are assumed to be independent. Precise details are omitted.

\emph{In what follows, it is crucial that sources $X_{ij}$ are independent of each other. Thus, in particular, only independent sources need to be reproduced at different destinations. In particular, the same source does not need to be reproduced at different destiinations, and the setting is unicast.}

\begin{thm} \label{BasicMultiUser}
Let $w \in \mathbb A$ communicate independent  $X_{ij}$ sources within distortions $D_{ij}$ from user $i$ to user $j$. Assuming random codes are permitted, reliable communication can be accomplished from user $i$ to user $j$ at rates $<R^I_{X_{ij}}(D)$, $1 \leq i, j \leq N$. This reliable communication can be accomplished by the consumption of the same channel resources as in the original architecture for communication of i.i.d. $X_{ij}$ sources.
\end{thm}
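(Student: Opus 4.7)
\emph{Plan.} The proof follows the same template as Theorem \ref{BasicTheorem}, run in parallel over every ordered pair $(i,j)$ with $i \ne j$, and glued together by the independence of the $X_{ij}$ sources. The key behavioral observation is: if each user $i$ presents on port $\mathbb{X}_{ij}$ a string that is marginally i.i.d.\ $X_{ij}$, and the strings across different pairs are independent, then the joint input to $w^n$ has exactly the joint distribution of the independent i.i.d.\ $X_{ij}$ sources, so the direct-distortion hypothesis on $w \in \mathbb{A}$ applies verbatim at the output.

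\emph{Codebook, encoding, decoding.} For each pair $(i,j)$, user $i$ independently draws a random codebook $\mathbb{K}^n_{ij}$ of $2^{\lfloor nR_{ij}\rfloor}$ codewords, each i.i.d.\ $p_{X_{ij}}$. To send $m^n_{ij} \in \mathbb{M}^n_{R_{ij}}$ from $i$ to $j$, user $i$ places the corresponding codeword $x^n_{ij} \in \mathbb{K}^n_{ij}$ on $\mathbb{X}^n_{ij}$. By independence of messages and codebooks, the joint distribution of the $N(N-1)$ strings fed into $w^n$ coincides with that of the independent i.i.d.\ $X_{ij}$ sources, so the received sequences $Y^n_{ij}$ satisfy
\begin{align*}
\Pr\!\left( \frac{1}{n}\, d^n_{ij}(x^n_{ij}, Y^n_{ij}) > D_{ij} \right) \;\le\; \omega_{n, ij}
\end{align*}
uniformly in $w \in \mathbb{A}$ and in the transmitted messages. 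User $j$ then decodes $X_{ij}$ by the same joint-typicality rule as in Theorem \ref{BasicTheorem}: output the unique $\hat{x}^n_{ij} \in \mathbb{K}^n_{ij}$ with $\hat{x}^n_{ij} \in \mathbb{T}(p_{X_{ij}}, \epsilon)$ and $\frac{1}{n} d^n_{ij}(\hat{x}^n_{ij}, y^n_{ij}) \le D_{ij}$, and otherwise declare error. The ``true codeword not jointly typical'' event has probability $\to 0$ by the displayed bound and the law of large numbers, while the ``spurious codeword typical'' event is controlled by the method-of-types calculation leading to \eqref{FinalE2Bound}, now with $p_X, q_Y$ replaced by $p_{X_{ij}}, q_{Y_{ij}}$. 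A union bound over the $N(N-1)$ pairs gives reliability at all rates $R_{ij} < R^I_{X_{ij}}(D_{ij})$. The resource-consumption claim follows exactly as in Theorem \ref{BasicTheorem}: since the joint input distribution to $w^n$ is identical to that of the direct scheme, the convention of Section \ref{NotandDef} applies.

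\emph{Main obstacle.} The only subtle point, and the reason independence of the $X_{ij}$'s is hypothesized, is that the Sanov step requires the non-transmitted codewords in $\mathbb{K}^n_{ij}$ to remain i.i.d.\ $p_{X_{ij}}$ and independent of $Y^n_{ij}$, even though $Y^n_{ij}$ is produced by a medium $w$ that jointly acts on \emph{all} codebooks. Independence of the $X_{ij}$ sources together with independent codebook generation across pairs is exactly what preserves this conditional independence after passing through $w^n$, allowing the point-to-point error analysis to be applied coordinate-wise. If the sources were correlated, the behavioral-matching hypothesis at the input of $w$ would entangle distinct codebooks with each received sequence and the type-class probability bound used in \eqref{FinalE2Bound} would no longer hold, which is consistent with the known failure of separation for correlated multi-terminal sources (see the remark after the theorem statement).
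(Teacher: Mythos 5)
Your proposal is correct and follows essentially the same route as the paper: both rest on the behavioral observation that independently generated i.i.d.\ $p_{X_{ij}}$ codebooks leave the joint input distribution to $w^n$ unchanged, so the direct-distortion hypothesis applies and the point-to-point error analysis of Theorem \ref{BasicTheorem} goes through for each pair. The only (cosmetic) difference is that the paper organizes the replacement of sources by codewords as an induction over pairs, whereas you run all $N(N-1)$ instances in parallel and close with a union bound.
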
 
\begin{proof}
Consider two users, user $s$ and user $r$. Source $X_{sr}$ is communicated from user $s$ to user $r$ with distortion $D_{sr}$. By Theorem \ref{BasicTheorem}, it follows that reliable communication can be accomplished from user $s$ to user $r$ at rates $<R^I_{X_{sr}}(D_{sr})$. By the encoder construction in the proof of Theorem \ref{BasicTheorem}, the input to the medium at terminal $s$ corresponding to source $X_{sr}$ in the architecture for reliable communication is some $X'_{sr}$ where $X'_{sr}$ has the same distribution as $X_{sr}$. It follows that the joint distribution at the medium terminals is not changed (i.i.d. $X_{ij}$ where $X_{ij}$ are independent of each other). By use of an induction argument, it follows that reliable communication can be accomplished from user $i$ to user $j$ at rates $< R^I_{X_{sr}}(D_{sr})$.

The argument concerning resource consumption is the same as in the point-to-point setting.
\end{proof}

Note that the behavioral view is  used: in order to carry out the induction argument, it is necessary that the distribution at the medium input terminals does not change, and this is a consequence of the behavioral interconnection view which matches distributions at the input terminals.

Next, source-medium separation is considered.

Let $\mathbb I_i$ and $\mathbb O_i$, $1 \leq i \leq N$ be finite sets. $\mathbb I_i$ are the medium input spaces and $\mathbb O_i$ are the medium output spaces.

The medium $m = <m^n>_1^\infty$ is a transition probability for each $n$, 
\begin{align}
m^n: \prod_{i = 1}^N \mathbb I_i^n  \rightarrow \mathbb P(\prod_{i = 1}^N \mathbb O_i^n)
\end{align}

The medium belongs to some set $m \in \mathbb B$.

Communication of sources $X_{ij}$ over $m \in \mathbb B$ requires modems $h_i =\linebreak <h_i^n>_1^\infty$ at user $i$. The modems are allowed to collectively generate random codes. The exact model of the modems is irrelevant; it suffices to sa that the interconnection of medium and modems can be thought of as a transition probability of the sort (\ref{GeneralTP}).

Medium $m \in \mathbb B$ is said to \emph{be capable of}  communicating the i.i.d. $X_{ij}$ sources within distortion $D_{ij}$ if the interconnection of the medium and the modems directly communicates the i.i.d. $X_{ij}$ sources within distortion $D_{ij}$.

Next, source-medium separation is discussed. Note that the medium is allowed to be compound, in other words, there is universality over the medium. Further note that \emph{it is crucial that} that assumption $X_{ij}$ are independent of each other be made; otherwise, source-medium separation does not hold.

\begin{thm}[Optimality of separation: multi-user setting]
\label{MUUSCSMultiUserSettingMU}
Assume that random codes are permitted. Let $m \in \mathbb B$ be  capable of universally communicating i.i.d. $X_{ij}$ source from user $i$ to user $j$, $1 \leq i, j \leq N$ to within distortion levels $D_{ij}$ under additive distortion functions $d_{ij}$. The sources $X_{ij}$ are independent of each other. Then, reliable communication can be accomplished from user $i$ to user $j$, $1 \leq i, j \leq N$ over $m \in \mathbb B$ at rates $R_{ij} < R^P_{X_{ij}}(D_{ij})$. This reliable communication can be accomplished by consumption of same or lesser medium resources at each user as the medium resource consumption in the original architecture for communication of i.i.d. $X_{ij}$ sources to within distortion levels $D_{ij}$.

Further, if reliable communication can be accomplished over  $m \in \mathbb B$ from user $i$ to user $j$, $1 \leq i, j \leq N$ at certain rates strictly larger than $R^P_{X_{ij}}(D_{ij})$, then the independent, i.i.d. $X_{ij}$ sources  can be communicated from user $i$ to user $j$, $1 \leq i, j \leq N$ to within distortion levels $D_{ij}$ over $m \in \mathbb B$ by use of a separation architecture. The medium resource consumption in this separation architecture at each user is the same as or lesser than the medium resource consumption in the architecture for reliable communication at rate strictly $>R^P_{X_{ij}}(D_{ij})$ from user $i$ to user $j$ when all messages for transmission between all pairs of users $(i, j)$ are independent of each other and the distribution on every message is uniform.
\end{thm}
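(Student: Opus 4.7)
The plan is to reduce this theorem to the point-to-point arguments of Theorems \ref{KKUniXKK} and \ref{BasicMultiUser}, using the same layered black-box behavioral interconnection strategy. Concretely, since the modems $h_i$ are part of the architecture that accomplishes communication within distortion, form the composite medium
\begin{align}
w \in \mathbb A \triangleq \left\{ \langle h_1^n, \ldots, h_N^n \rangle \circ m^n \circ \langle h_1^n, \ldots, h_N^n \rangle \ | \ m \in \mathbb B \right\}
\end{align}
whose input/output terminals are the source spaces $\mathbb X_{ij}$, $\mathbb Y_{ij}$. By the hypothesis that $m \in \mathbb B$ is capable of communicating the independent i.i.d. $X_{ij}$ sources within distortions $D_{ij}$, the compound composite medium $w \in \mathbb A$ \emph{directly} communicates these sources within the same distortions.

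For the first (achievability) part, I would apply Theorem \ref{BasicMultiUser} to $w \in \mathbb A$, obtaining encoders/decoders $\langle E_{ij}^n, F_{ij}^n \rangle_1^\infty$ that reliably communicate at every rate $R_{ij} < R^I_{X_{ij}}(D_{ij}) = R^P_{X_{ij}}(D_{ij})$ from user $i$ to user $j$, jointly and universally over $m \in \mathbb B$. Layering these on top of the modems at user $i$ (and dually at user $j$) yields an encoder-decoder pair $\langle E_{ij}^n \circ h_i^n, h_j^n \circ F_{ij}^n \rangle_1^\infty$ for reliable communication over $m \in \mathbb B$ itself. The behavioral construction in Theorem \ref{BasicMultiUser} generates each $E_{ij}^n$'s output i.i.d.\ $X_{ij}$, so at every medium input terminal the joint distribution matches that of the original architecture; this is what supplies the ``same or lesser medium resource consumption'' conclusion, via the resource discussion from Section \ref{NotandDef}.

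For the second (converse) part, the argument is the standard source-then-channel concatenation, now made universal by using random codes. Fix rates $R_{ij}$ strictly above $R^P_{X_{ij}}(D_{ij})$ at which reliable communication is achievable over $m \in \mathbb B$. Independently for each pair $(i,j)$, take an (unconstrained) rate-$R_{ij}$ source code that covers $X_{ij}$ within distortion $D_{ij}$ — this exists because $R^P_{X_{ij}}(D_{ij}) = R^I_{X_{ij}}(D_{ij})$ — and feed the resulting independent message streams into the given reliable communication scheme, then source-decode at each destination. To make the message distribution uniform (so that the resource-consumption claim reduces to that of the uniform-message reliable scheme), compose the source encoder with an independently chosen random permutation of $\mathbb M_{R_{ij}}^n$ at each pair, and invert this permutation before source decoding; since random codes are permitted and the permutations at distinct $(i,j)$ are independent, the messages remain independent across pairs and uniform within each pair, preserving the unicast independence structure needed by Theorem \ref{BasicMultiUser}.

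I expect the main obstacle to be a bookkeeping point rather than a deep one: verifying that after layering the Theorem \ref{BasicMultiUser} code on top of the modems, the joint distribution at the $N$ medium input terminals is \emph{exactly} the one induced by the independent i.i.d.\ $X_{ij}$ sources, so that the universality over $m \in \mathbb B$ and the resource-consumption statement both carry through. This is precisely where the behavioral interconnection perspective is essential — it is the fact that $E_{ij}^n(X'_{ij})$ has distribution i.i.d.\ $X_{ij}$, combined with the independence of sources across pairs, that lets the induction inside Theorem \ref{BasicMultiUser} proceed without any pair-by-pair distribution drift. Once that behavioral matching is in place, everything else (continuity of $R^I_X(\cdot)$, passing from $R^P$ to $R^I$, invoking continuity to absorb the arbitrary $\epsilon$) is identical to the point-to-point case, and a detailed write-up can be omitted as in the statement.
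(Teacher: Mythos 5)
Your proposal is correct and follows essentially the same route as the paper: the paper's proof is the one-line observation that the multi-user separation theorem follows from Theorem \ref{BasicMultiUser} exactly as Theorem \ref{KKUniXKK} follows from Theorem \ref{BasicTheorem}, i.e., compose the modems with the medium to get a compound composite medium that directly communicates the sources, invoke Theorem \ref{BasicMultiUser}, layer the resulting codes on top of the modems, and use the standard source-then-channel concatenation (with a permuted codebook for the uniform-message resource claim) for the second part. Your write-up simply makes explicit the details the paper omits.
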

\begin{proof}
Follows from Theorem \ref{MUUSCSMultiUserSettingMU} in exactly the same way as Theorem \ref{KKUniXKK} follows from Theorem \ref{BasicTheorem}.
\end{proof}

\section{Other applications} \label{KKKKKKK}

Applications of a  conceptual nature are discussed in this section.

\subsection{A randomized covering-packing duality and an\\ operational view} \label{OOVV}

In \cite{paperDuality}, a duality between source and channel coding is shown between source-coding and channel-coding. This duality is a sort of a covering-packing duality, but in a randomized sense. Further, this is an operational view in the sense that the definition of channel capacity as the maximum rate of reliable communication and rate-distortion function as the minimum rate needed to code a source within a certain distortion are used. Single letter simplifications in terms of maximum and minimum mutual information respectively are not used. Please see see \cite{paperDuality} for details.

\subsection{A proof of the converse part of source-channel separation using achievability techniques}  \label{AATT}

Note Theorem \ref{KKUniXKK}. The first part of the theorem is what is referred to as the converse part of source-channel separation theorem for communication with a fidelity criterion. This is proved using a random coding argument as opposed to converse techniques of manipulations with entropies and mutual informations, see for example, in \cite{Shannon}. This proof of a theorem which is traditionally viewed as converse by use of achievability techniques is interesting in its own right.

\subsection{An equivalence perspective in point-to-point communication} \label{Equ}

Bits are used as the fundamental currency leading to a block of reliable communication in the sense that all architectures are built above an architecture for reliable communication. One of the reasons for this is that if a channel is known to transmit bits reliably, then, that channel can be used for various other purposes (example: communication within a certain distortion) without using explicit knowledge of the channel (there are other reasons, for example, two such bit pipes can be connected and the end-to-end result is also reliable communication; however, let us restrict just to this reason here). However, a pipe which communicates a source $X$ to within a certain distortion $D$ can also be used as a universal building block in the point-to-point setting assuming random codes are permitted. This is because, by Theorem \ref{BBTT}, reliable communication can be accomplished over such a `pipe' to within rates $<R^E_X(D)$ without knowledge of the channel, and this implies that if $X'$ is another source for which $R^E_{X'}(D') < R^E_X(D)$, the source $X'$ can also be communicated over this pipe by use of the usual argument of source-coding followed by channel-coding. Thus, in the point-to-point setting, the problems of reliable communication and communication within a certain distortion are equivalent in the sense that either architecture can be used as the primary building block for communication. Note that universality over the channel in Theorem \ref{BBTT} is crucial in order to prove this equivalence.

\section{Discussion and recapitulation} \label{RR}

Layered black-box behavioral interconnection is discussed, and illustrated through the problem of point-to-point and unicast multi-user communication with fidelity criteria when the channel or media may belong to a set and the set consists of general channels or media.  The problem of communication with a fidelity criteria over general (in the sense of \cite{VerduHan}), compound media are reduced to the problem of reliable communication over the media. This is a reductionist view in the sense that the problem of communication with fidelity criteria is reduced to the problem of reliable communication; no results are presented for capacity of channels or networks. 

A behavioral interconnection perspective has been useful in a control context. The authors hope that the layered black-box behavioral interconnection perspective discussed and illustrated through this paper by an example of a communication problem  will be useful in other communication and control problems.

\section*{Acknowledgements}

The authors thank Dr. Shashibhushan Borade for initial discussions which led to the proof of Theorem \ref{BasicTheorem}. The authors thank Prof. Ashish Khisti for discussions which led to the construction of a channel for which rates $>R^P_X(D) + \epsilon$ are not reliably achievable, as has been discussed in Section~4.2. The authors thanks Dr. Tom Richardson and Dr. Rajiv Laroia for hosting the first author for a semester at Qualcomm Flarion Technologies which made him think about network generalizations of Theorem \ref{BasicTheorem}, thus leading to Theorem \ref{BasicMultiUser}. The authors thanks Prof. Emeritus Robert Gallager for many insightful discussions concerning the results in this paper. Finally, the authors thank Prof. John Tsitsiklis and Prof. Vincent Tan for various discussions at various points.

\bibliography{1-Agarwal}

\address{Mumbai, Maharashtra, 410210, India\\
\email{magar@alum.mit.edu}}

\address{Laboratory for information and decision systems \\
Department of Electrical Engineering and Computer Science \\
Massachusetts Institute of Technology\\
Cambridge, MA 02139-4307, USA\\
\email{mitter@mit.edu}}

\address{Department of Electrical Engineering and Computer Sciences\\
University of California, Berkeley\\
 Berkeley, CA 94720-1770, USA\\
\email{sahai@eecs.berkeley.edu}}

\end{document}